\let\csname equation*\endcsname\relax 
\let\csname endequation*\endcsname\relax 
\newtheorem{theorem}{Theorem}
\newtheorem{remark}[theorem]{Remark}
\newtheorem{proposition}[theorem]{Proposition}
\newtheorem{conjecture}[theorem]{Conjecture}
\newcommand{\matrixform}[1]{\ensuremath{\mathcal{#1}}}
\newcommand{\functional}[1]{\ensuremath{\mathtt{#1}}}
\newcommand{\functionspace}[1]{\ensuremath{\mathbb{#1}}}
\newcommand{\kk}{\ensuremath{k}}
\newcommand{\xx}{\ensuremath{x}}
\newcommand{\rr}{\ensuremath{r}}
\newcommand{\dd}{\ensuremath{\mathrm{d}}}
\newcommand{\dx}{\ensuremath{\,\dd\xx}}
\newcommand{\dk}{\ensuremath{\,\dd\kk}}
\newcommand{\dr}{\ensuremath{\,\dd\rr}}
\newcommand{\de}{\ensuremath{\delta}}
\newcommand{\dom}{\functionspace{D}}   
\newcommand{\HH}{\functionspace{H}}   
\newcommand{\RR}{\functionspace{R}} 
\newcommand{\fa}{\functional{A}}
\newcommand{\fb}{\functional{B}}
\newcommand{\fc}{\functional{C}}   
\newcommand{\ff}{\functional{F}}
\newcommand{\fg}{\functional{G}}   
\newcommand{\fh}{\functional{H}}   
\newcommand{\fj}{\functional{J}}   
\newcommand{\fk}{\functional{K}}   
\newcommand{\fs}{\functional{S}}   
\newcommand{\MJ}{\matrixform{J}}  
\newcommand{\MN}{\matrixform{N}}  
\newcommand{\MT}{\matrixform{T}}  
\newcommand{\MU}{\matrixform{U}}  
\newcommand{\MV}{\matrixform{V}}  
\newcommand{\MW}{\matrixform{W}}  
\newcommand{\MX}{\matrixform{X}}  
\newcommand{\MY}{\matrixform{Y}}  
\newcommand{\cbra}[2]{\ensuremath{\left[  #1 , #2 \right] }}   
\newcommand{\pbra}[2]{\ensuremath{\left\{ #1 , #2 \right\}}}   
\newcommand{\mbra}[2]{\ensuremath{\left(  #1 , #2 \right) }}   
\newcommand{\trace}{\ensuremath{\mathrm{tr} \,}}
\newcommand{\fracd}[2]{\ensuremath{\frac{\delta   #1}{\delta   #2}}}
\newcommand{\fracp}[2]{\ensuremath{\frac{\partial #1}{\partial #2}}}
\newcommand{\fracdsqr}[2]{\ensuremath{\frac{\delta^2   #1}{\delta #2^2}}}
\newcommand{\changecolor}[1]{\hypersetup{linkcolor=#1}}  
\begin{document}

\title[A metriplectic formulation of polarized radiative transfer]{A metriplectic formulation of polarized radiative transfer}

\author{V. Bosboom$^1$, M. Kraus$^{2,3}$ and M. Schlottbom$^1$}
\address{$^1$ Department of Applied Mathematics, University of Twente,
P.O. Box 217, 7500 AE Enschede, The Netherlands.}
\address{$^2$ Max-Planck-Institut f\"ur Plasmaphysik
Boltzmannstra\ss e 2, 85748 Garching, Germany.}
\address{$^3$ Technische Universit\"at M\"unchen, Zentrum Mathematik, Boltzmannstra\ss{}e 3, 85748 Garching, Germany}%
\eads{\mailto{v.bosboom@utwente.nl}, \mailto{michael.kraus@ipp.mpg.de}, \mailto{m.schlottbom@utwente.nl}}
\begin{abstract}
    We present a metriplectic formulation of the radiative transfer equation with polarization and varying refractive index and show that this formulation automatically satisfies the first two laws of thermodynamics. In particular, the derived antisymmetric bracket enjoys the Jacobi identity. To obtain this formulation we suitably transform the equation and show that important physical quantities derived from the solution remain invariant under such a transformation.
\end{abstract}
\noindent{\it Keywords\/}: Radiative transfer, polarization, metriplectic formulation, structure-preservation
\maketitle

\section{Introduction}
We investigate the mathematical structure of the radiative transfer equation (RTE), which is a frequently used model in many applications, for example, atmospheric science \cite{Hansen1974, VanDiedenhoven2006} or biomedical imaging \cite{He2021,Zhu2022}.
While the RTE was first derived based on phenomenological arguments \cite{Chandrasekhar1960}, it was later given a firm basis as a high-frequency limit of the Maxwell equations \cite{Papanicolaou1975,Ryzhik1996}. 

The RTE is a linear integro-differential equation that describes the propagation of polarized light through a medium with varying refractive index and its scattering due to random fluctuations. 
These two processes, i.e., transport and scattering, often require a different treatment; for instance, in the mathematical analysis of the RTE, see, e.g., \cite{DautrayLions6,Bosboom2022}, or in its numerical approximation by discrete ordinates methods \cite{LewisMiller84}.

As we will show in this paper, the transport process can be described by a Hamiltonian flow, while the scattering process constitutes a dissipative process.
We then combine these two processes into one theoretical framework by using the so-called metriplectic formalism, originally conceived by Morrison \cite{Morrison1984,Morrison1984b,Morrison1986}.
The metriplectic formalism not only illuminates both the conservation and dissipation properties of an equation and gives a direct link to thermodynamics, but it also facilitates the construction of structure-preserving discretizations \cite{Suzuki2016, Kraus2017, Hirvijoki2021}.
When unpolarized light is considered the RTE reduces to a single equation for the light intensity \cite{Ryzhik1996}, and a metriplectic formulation for this case has recently been reported in \cite{Schlottbom2021}. The extension to polarized radiation considered here requires novel arguments.

The outline of this paper is as follows: In Section \ref{sec: Metriplectic_Formalism} we give a brief overview of the metriplectic formalism and explain that it can be used to reproduce the first two laws of thermodynamics. 
In Section \ref{sec: RTE} we introduce the radiative transfer equation, relate the solution to the Stokes parameters describing polarized light, and derive our main results on the metriplectic formulation for the radiative transfer equation. 
Furthermore, we show that for unpolarized light this metriplectic formulation reduces to one reported in \cite{Schlottbom2021}. 
In Section \ref{sec: Coupling} we cast the RTE in a suitable form by eliminating the term that describes optical rotation, which requires a proper relation of the RTE to Maxwell's equations. Additionally, we show that this procedure does not influence the physical quantities described by the solution. 
In Section \ref{sec: Conc-Disc} we give a conclusion on our work and discuss some open questions that could be of interest for future research. Lastly, in the appendix we give some of the tedious proofs of propositions in the previous sections.

\section{Introduction to the metriplectic formalism}
\label{sec: Metriplectic_Formalism}
The metriplectic formalism \cite{Kaufman1982, Kaufman1984, Morrison1984, Morrison1984b, Morrison1986, Grmela1984, Grmela1985} provides a framework for the description of systems that contain both Hamiltonian and dissipative parts. The Hamiltonian part of the evolution of such a system is determined by a Poisson bracket $\pbra{\cdot}{\cdot}$ and the Hamiltonian functional $\fh$, which is usually related to the total energy of the system and which stays constant in time. The dissipative part of the evolution is determined by a metric bracket $\mbra{\cdot}{\cdot}$ and an entropy functional $\fs$ that evolves monotonically in time. The purpose of this section is to give an abstract outline of this formalism.
Next, we describe the metriplectic framework in more detail. The presentation follows \cite{Morrison1986}.

\subsection{General framework}
\label{sec:general}

Let $H$ be a Hilbert space of a certain class of functions on a domain $\dom$, with an inner product $\langle\cdot,\cdot\rangle$, and let $u\in H$ be a dynamical variable. In the metriplectic formalism the time evolution of any functional $\ff$ of $u$ is given by
\begin{align}
\dfrac{\dd\ff}{\dd t} = \pbra{\ff}{\fg} + \mbra{\ff}{\fg},
\end{align}
where $\fg = \fh - \fs$ is called the free energy functional, analogous to the Gibbs free energy from thermodynamics, and $\fh$ and $\fs$ are the Hamiltonian and entropy functionals, respectively.
The Poisson bracket, which describes the ideal conservative evolution of the system, is real bilinear, antisymmetric, and satisfies Leibniz's rule and the Jacobi identity:
\begin{align}
\pbra{a\fa+b\fb}{\fc} &= a\pbra{\fa}{\fc}+b\pbra{\fb}{\fc},\\
\pbra{\fa}{\fb} &= -\pbra{\fb}{\fa},\\
\label{eq:leibniz_rule}
\pbra{\fa\fb}{\fc} &= \fa\pbra{\fb}{\fc}+\pbra{\fa}{\fc}\fb,\\
\label{eq:jacobi_identity}
  \pbra{ \pbra{ \fa }{ \fb } }{ \fc } &+ \pbra{ \pbra{ \fb }{ \fc } }{ \fa } +\pbra{ \pbra{ \fc }{ \fa } }{ \fb }= 0,
\end{align}
for arbitrary functionals $\fa,\fb,\fc$ of $u$ and real numbers $a, b$. 
The Poisson bracket can be represented via an anti-self adjoint operator $\fj(u):H\to H$ by
\begin{align}
    \pbra{\fa}{\fb} = \left\langle\dfrac{\de \fa}{\de u},\fj(u)\dfrac{\de \fb}{\de u} \right\rangle,
\end{align}
where $\dfrac{\de \fa}{\de u} \in H$ denotes the functional derivative of $\fa$, i.e., for any $v\in H$,
\begin{align}\label{eq:func_der}
    \left\langle \dfrac{\de \fa}{\de u} ,v\right\rangle = \lim_{\varepsilon\to 0}\frac{1}{\varepsilon}\left(\fa[u+\varepsilon v]-\fa[u]\right).
\end{align}
Furthermore, the entropy functional $\fs$ is a Casimir invariant of the Poisson bracket, meaning that for all functionals $\fa$ it holds that
\begin{align}
    \pbra{\fa}{\fs} = 0,
\end{align}
or equivalently $\fj(u)\dfrac{\de \fs}{\de u}=0$.
The metric bracket describes the dissipative effects of the system.
It is real bilinear, symmetric and negative semidefinite, i.e.,
\begin{align}
    \mbra{a\fa+b\fb}{\fc} &= a\mbra{\fa}{\fc}+b\mbra{\fb}{\fc},\\
    \mbra{\fa}{\fb} &= \mbra{\fb}{\fa},\\
    \mbra{\fa}{\fa} &\leq 0,
\end{align}
for arbitrary functionals $\fa,\fb,$ and $\fc$ of $u$ and real numbers $a$ and $b$. 
The metric bracket $\mbra{\cdot}{\cdot}$ has the representation
\begin{align}
    \mbra{\fa}{\fb} = \left\langle \dfrac{\de \fa}{\de u} , \fk(u)\, \dfrac{\de \fb}{\de u} \right\rangle,
\end{align}
where $\fk(u):H\to H$ is a self-adjoint operator.
Additionally, the Hamiltonian functional $\fh$ has zero metric bracket with any other functional of $u$, or equivalently $\fk(u)\dfrac{\de \fh}{\de u}=0$.
From the above-mentioned properties of these brackets it then follows that
\begin{align}
\dfrac{\dd\fh}{\dd t} &= \pbra{\fh}{\fg} + \mbra{\fh}{\fg} = \pbra{\fh}{\fh} = 0 , \\
\dfrac{\dd\fs}{\dd t} &= \pbra{\fs}{\fg} + \mbra{\fs}{\fg} = -\mbra{\fs}{\fs} \geq 0 ,
\end{align}
reproducing the First and Second Law of Thermodynamics.
\section{The radiative transfer equation}
\label{sec: RTE}
The radiative transfer equation can be used to model the propagation of light through a nonhomogeneous medium with many scatterers. We start by introducing the radiative transfer equation in its form frequently encountered in the literature and illustrate the physical meaning of the solution. We then present a suitable Lie bracket in Section~\ref{sec:Lie}. After exploiting the freedom of choosing a basis for the plane perpendicular to the transport direction, we construct the ingredients for the metriplectic formalism in Section~\ref{sec:metriplectic}, which constitute the main results of this paper. Section~\ref{sec:scalar} discusses the relation of the metriplectic formulation for the case of unpolarized light to the metriplectic formulation for the scalar radiative transfer equation derived in \cite{Schlottbom2021}. The proofs and derivations of the results that follow are deferred to later sections and the appendix.

\subsection{Radiative transfer}
The radiative transfer equation is a high-frequency limit of the Maxwell equations describing the propagation of polarized light. It is given by \cite{Chandrasekhar1960,Ryzhik1996}
\begin{align}
    \label{eq: RTE}
    \frac{\partial \MW}{\partial t} + \nabla_k\omega\cdot\nabla_x \MW - \nabla_x \omega\cdot\nabla_k \MW + \Sigma \MW=\MN\MW-\MW\MN + S(\MW).
\end{align}
The solution to this equation $\MW = \MW(x,k,t)$ is called the coherence matrix and is a $2\times2$ matrix-valued function, depending on position $x\in \RR^3$, wave vector $k\in \RR^3\backslash\{0\}$ and time $t\geq 0$.
The second and third terms on the left-hand side are transport terms describing the refraction of the light as it propagates through an inhomogeneous medium. The expressions of the quantities in this equation follow from a proper derivation of the RTE from the Maxwell equations \cite{Ryzhik1996}. The function $\omega(x,k)=v(x)|k|$ is the dispersion relation, with $v(x)=1/\sqrt{\epsilon(x)\mu(x)}$ the velocity function, and $\epsilon$ and $\mu$ the scalar local permittivity and permeability, respectively. 
The optical rotation matrix $\MN$ is given by
\begin{align}
    \MN(x,k) = 
    \begin{pmatrix}
    0 & n(x,k)\\
    -n(x,k) & 0
    \end{pmatrix}=
    n(x,k)\MJ,
\end{align}
where  $n(x,k)$ is a scalar function related to the rate of change in linear polarization, and $\MJ$ is the canonical symplectic matrix
\begin{align*}
    \MJ = \begin{pmatrix}
    0& 1\\
    -1& 0
    \end{pmatrix}.
\end{align*}
The coupling term describes the rotation of the polarization plane due to the torsion of light rays in an inhomogeneous medium, also known as Rytov's law \cite{Vinitskii1990}. The scattering operator $S$, given by
\begin{align}
    \label{eq: Scattering}
    (S\MW)(x,k,t) = \int_{|k'| = |k|} \sigma(x,k, k') \MT(x,k,k') \MW(x,k',t)\MT(x,k,k')^* \,\rmd\lambda(k'),
\end{align}
describes the redistribution of light over different propagation directions and polarizations. Here $\sigma$ is a positive function that is symmetric in $k$ and $k'$, $\lambda$ is the surface measure, and $\MT(x,k,k')$ is a real $2\times 2$ matrix-valued function that satisfies $\MT(x,k,k') = \MT(x,k',k)^*$, where the superscript $^*$ denotes transposition and complex conjugation.
The total scattering rate is described by the scalar function $\Sigma(x,k)$, which is given by
\begin{align}
    \label{eq: ScatteringRate}
    \Sigma(x,k)I_2 = \int_{|k'| = |k|} \sigma(x,k, k') \MT(x,k,k') \MT(x,k,k')^*\,\rmd\lambda(k'),
\end{align}
with $I_2$ denoting the $2\times 2$ identity matrix. It has been shown that under mild assumptions on $v(x), n(x,k)$ and $\Sigma(x,k)$ equation \eqref{eq: RTE} is well-posed and that the solution $\MW(x,k,t)$ is Hermitian \cite{Bosboom2022}. 
Furthermore, the solution is related to the four Stokes parameters $I,Q,U,V$, see \cite{Chandrasekhar1960,Ryzhik1996}, as
\begin{align}\label{eq: Stokes}
    \MW = \frac{1}{2}
    \begin{pmatrix}
    I+Q & U+\rmi V\\
    U-\rmi V & I-Q
    \end{pmatrix}.
\end{align}
Here $I$ denotes the intensity of the light, $Q$ and $U$ describe different modes of linear polarization, and $V$ is related to the amount of circular polarization. 

There are two major challenges for forming a metriplectic formulation of the radiative transfer equation as stated in \eqref{eq: RTE}. First, the advective term $\nabla_k\omega\cdot\nabla_x \MW-\nabla_x\omega\cdot\nabla_k \MW$ combines the scalar-valued function $\omega$ with the matrix-valued solution $\MW$, while the metriplectic formalism requires a pairing, i.e., a bracket, for two matrix-valued functions.
Second, the optical rotation term $\MN$ makes it difficult  to construct a Poisson bracket, i.e., a bracket that enjoys the Jacobi identity. In the next two subsections we address these challenges.

\subsection{A Lie bracket for matrix-valued functions}\label{sec:Lie}
Denote by $\sigma_i$ for $i=1,2,3$ the normalized Pauli-matrices
\begin{align*}
    \sigma_1 = \frac{1}{\sqrt{2}}\begin{pmatrix}
        0 & 1 \\
        1 & 0
    \end{pmatrix}, 
    \sigma_2 =  \frac{1}{\sqrt{2}}\begin{pmatrix}
        0 & -i\\
        i & 0
    \end{pmatrix},
    \sigma_3 =  \frac{1}{\sqrt{2}}\begin{pmatrix}
        1 & 0\\ 0 & -1
    \end{pmatrix}.
\end{align*}
Together with the identity matrix $\sigma_0=I_2/\sqrt{2}$ these matrices form a basis for the real vector space of $2\times 2$ Hermitian matrices.
 It holds that
\begin{align*}
    \trace\left({\sigma_j}{\sigma_k}\right)=\delta_{j,k},\quad\text{for } j,k\in\{0,1,2,3\}.
\end{align*}
Hence, any $2\times2$ Hermitian matrix can be decomposed as
\begin{align*}
    \MW = \sum_{j=0}^3 W_j \sigma_j,\quad\text{with } W_j = \trace\left(\MW \sigma_j\right)\in\RR.
\end{align*}
For sufficiently smooth $2\times 2$ Hermitian matrix-valued functions $\MU,\MV$ we define the bracket
\begin{align}
    \label{eq: Matrix_Bracket}
    \left[\MU,\MV\right]_{M} &=\sum_{j=0}^3 [U_j,V_j]\sigma_j,
\end{align}
where $\cbra{\cdot}{\cdot}$ is the canonical Poisson bracket given by
\begin{align}\label{eq: cbra}
    \cbra{f}{g} = \nabla_xf\cdot\nabla_kg-\nabla_kf\cdot\nabla_x g
\end{align}
for sufficiently smooth functions $f,g$. When 
$\MU=\Omega=\sum_{j=0}^3\omega\sigma_j$ and $\MV=\MW$, this bracket gives the transport term in \eqref{eq: RTE}, i.e., 
\begin{align*}
\cbra{\Omega}{\MW}_M = \sum_{j=0}^3\cbra{\omega}{W_j}\sigma_j = \nabla_x\omega\cdot\nabla_k \MW - \nabla_k \omega\cdot\nabla_x \MW,
\end{align*}
where we use the subscript $M$ to emphasize that the bracket acts on matrix-valued functions. Furthermore, the following properties hold, making this bracket a Lie bracket.
\begin{proposition}
    \label{prop: Matrix_Bracket}
    For sufficiently smooth $2\times 2$ matrix-valued functions $\MU,\MV,\MW$, the bracket defined in \eqref{eq: Matrix_Bracket} is bilinear, anti-symmetric and satisfies the Jacobi identity
    \begin{align}
        \cbra{\cbra{\MU}{\MV}_M}{\MW}_M+\cbra{\cbra{\MV}{\MW}_M}{\MU}_M+\cbra{\cbra{\MW}{\MU}_M}{\MV}_M = 0.
    \end{align}
\end{proposition}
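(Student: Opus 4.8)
The bilinearity of $\cbra{\cdot}{\cdot}_M$ is immediate from \eqref{eq: Matrix_Bracket}, and interchanging $\MU$ and $\MV$ in the second line of \eqref{eq: Matrix_Bracket} together with the anti-symmetry $\cbra{f}{g}=-\cbra{g}{f}$ of the canonical bracket gives $\cbra{\MV}{\MU}_M=-\cbra{\MU}{\MV}_M$ at once. For the Jacobi identity the plan is to substitute the definition twice and to reduce everything to iterated canonical brackets, where the (well known) Jacobi identity for $\cbra{\cdot}{\cdot}$ is available. To organise this I would introduce the auxiliary bilinear operation $(\MU\bullet\MV)_{ij}:=\sum_k\cbra{\MU_{ik}}{\MV_{kj}}$ -- the matrix product in which scalar multiplication is replaced by $\cbra{\cdot}{\cdot}$ -- so that $\cbra{\MU}{\MV}_M=\tfrac12(\MU\bullet\MV-\MV\bullet\MU)$.

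Expanding the Jacobiator $\cbra{\cbra{\MU}{\MV}_M}{\MW}_M+\cbra{\cbra{\MV}{\MW}_M}{\MU}_M+\cbra{\cbra{\MW}{\MU}_M}{\MV}_M$ with this notation yields twelve terms: six of ``left-nested'' type $(\MU\bullet\MV)\bullet\MW$ and six of ``right-nested'' type $\MU\bullet(\MV\bullet\MW)$. A short manipulation using only bilinearity shows that in the cyclic sum the right-nested terms cancel in pairs, so that four times the Jacobiator equals an alternating sum of six associators $(\MU\bullet\MV)\bullet\MW-\MU\bullet(\MV\bullet\MW)$. For each associator one computes, componentwise,
\[
\Big((\MU\bullet\MV)\bullet\MW-\MU\bullet(\MV\bullet\MW)\Big)_{ij}
=\sum_{k,l}\Big(\cbra{\cbra{\MU_{il}}{\MV_{lk}}}{\MW_{kj}}-\cbra{\MU_{il}}{\cbra{\MV_{lk}}{\MW_{kj}}}\Big),
\]
and the scalar Jacobi identity rewrites the summand as $-\cbra{\MV_{lk}}{\cbra{\MU_{il}}{\MW_{kj}}}$. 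I would then substitute this into the alternating sum, expand each double bracket by the product rule for $\cbra{\cdot}{\cdot}$, and sort the resulting monomials according to which of $\MU,\MV,\MW$ carries a second derivative.

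The monomials in which the matrix occupying the middle slot of a triple product is differentiated twice vanish, because $\cbra{\cdot}{\cdot}$ contains only first-order derivatives and mixed second partials commute; this is the same cancellation that makes the scalar Jacobi identity work. The remaining monomials are grouped by the matrix-multiplication order of $\MU,\MV,\MW$, and within each order they must cancel after relabelling the summation indices $k,l$ and using $\cbra{f}{g}=-\cbra{g}{f}$ and $\nabla_x\nabla_k=\nabla_k\nabla_x$. I expect this last bookkeeping to be the main obstacle: the matrix-product order is rigid -- only the internal indices $k$ and $l$ may be renamed, not the positions of $\MU$, $\MV$, $\MW$ -- so the cancellations have to be exhibited order-class by order-class, and matching up the many differentiated pieces is intricate and error-prone. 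A cleaner, more conceptual route one could attempt instead would be to realise $\cbra{\cdot}{\cdot}_M$ as the commutator attached to an associative (or left-symmetric) product on matrix-valued functions, or as the bracket induced by a Poisson structure on an enlarged phase space, which would yield the Jacobi identity abstractly; but the way matrix multiplication is entangled with the Poisson bracket in \eqref{eq: Matrix_Bracket} makes it unclear that such a realisation exists, so I would keep the direct computation as the primary plan.
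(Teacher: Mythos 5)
Your treatment of bilinearity and anti-symmetry is fine and matches the paper's, and your reduction of the Jacobiator to an alternating sum of associators of $(\MU\bullet\MV)_{ij}=\sum_k\cbra{\MU_{ik}}{\MV_{kj}}$, with each associator rewritten via the scalar Jacobi identity, is correct bookkeeping (a small aside: the right-nested terms do not ``cancel in pairs''; each one is absorbed into an associator together with a left-nested term). The genuine gap is exactly the step you defer to the end: the ``order-class by order-class'' cancellation you expect does not exist, and no relabelling of internal indices can produce it, because the surviving monomials sit in rigid, non-commuting matrix-product orders. In fact the identity itself fails for general matrix-valued arguments. On separable elements your $\bullet$-calculus gives $\cbra{fa}{gb}_M=\tfrac{1}{2}(ab+ba)\,\cbra{f}{g}$, i.e.\ the Jordan product of the constant matrices tensored with the canonical bracket, and the Jordan product of $2\times2$ matrices is commutative but not associative. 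Concretely, take the Hermitian-valued polynomial functions
\begin{align*}
\MU=(x^1)^2\begin{pmatrix}0&1\\1&0\end{pmatrix},\qquad
\MV=k^1\begin{pmatrix}1&0\\0&-1\end{pmatrix},\qquad
\MW=k^1\begin{pmatrix}0&1\\1&0\end{pmatrix}.
\end{align*}
Then $\cbra{\MU}{\MV}_M=0$ (the anticommutator of the two constant matrices vanishes), $\cbra{\MV}{\MW}_M=0$ (both depend on $k$ only), while $\cbra{\MW}{\MU}_M=-2x^1I_2$, so that
\begin{align*}
\cbra{\cbra{\MU}{\MV}_M}{\MW}_M+\cbra{\cbra{\MV}{\MW}_M}{\MU}_M+\cbra{\cbra{\MW}{\MU}_M}{\MV}_M
&=\cbra{-2x^1I_2}{\MV}_M\\
&=\begin{pmatrix}-2&0\\0&2\end{pmatrix}\neq 0.
\end{align*}
Hence neither a completion of your computation nor a realisation of $\bullet$ as an associative or left-symmetric product can succeed.

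Your hesitation about the final bookkeeping therefore identifies the actual crux, which the paper's appendix argument also glosses over: there the vanishing of the cyclic sum is asserted ``by adding all the cyclic permutations and appropriately relabeling the indices $m\leftrightarrow n$'', a relabelling that tacitly commutes matrix factors and is only valid when the entries involved commute pointwise (in particular in the scalar case). What does hold, and what the transport term of the RTE actually requires, is the identity when one of the three arguments is a scalar multiple of the identity: $\cbra{\omega I_2}{\MV}_M=\cbra{\omega}{\MV}$ acts entrywise, and by the scalar Jacobi identity it is a derivation of $\cbra{\cdot}{\cdot}_M$, so the Jacobiator vanishes whenever one argument has the form $\omega I_2$ (and likewise when all three arguments commute pointwise). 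Any correct version of the proposition must be restricted along these lines; as stated for arbitrary smooth $2\times2$ matrix-valued functions it is contradicted by the example above, so the plan you outline cannot be brought to a valid conclusion.
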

\begin{proof}
\changecolor{black}\hyperlink{Proof_Matrix_Bracket}{See the proof in the appendix.}\changecolor{blue}
\end{proof}

\subsection{Removal of the optical rotation term \texorpdfstring{$\MN$}{}}
From the derivation of the radiative transfer equation in \cite{Ryzhik1996} it can be seen that the optical rotation term depends on a choice of an orthonormal basis for the plane perpendicular to the vector $k$. 
There is freedom in choosing this basis and different choices give different values of the optical rotation term. Furthermore, there exists a specific choice of basis that eliminates the optical rotation term completely \cite{Ryzhik1996, Lewis1966}. 
We recall the procedure of finding this basis in Section \ref{sec: Coupling} and show that this choice does not influence the physical relevance of the solution $\MW$, which is contained in certain expressions in terms of the Stokes parameters.
Using this basis, and the bracket introduced in \eqref{eq: Matrix_Bracket}, the radiative transfer equation is given by
\begin{align}
    \label{eq: RTENoCouplingBracket}
    \frac{\partial \MW}{\partial t} = \cbra{\Omega}{\MW}_M +S(\MW)- \Sigma \MW,
\end{align}
which is the form we employ for constructing our metriplectic formulation in the following.
Let us note that the matrix $\MT(x,k,k')$ depends on the choice of basis as well, but that the structural properties of the corresponding $S(\MW)$ and $\Sigma$ used in the metriplectic formulation in \eqref{eq: RTENoCouplingBracket} are still of the form given in \eqref{eq: Scattering} and \eqref{eq: ScatteringRate}, independent of the choice of this basis, see Remark~\ref{remark: scattering} below.

\subsection{Constructing the metriplectic formulation}\label{sec:metriplectic}
We consider the radiative transfer equation on the domain $(x,k)\in \dom = \RR^3\times \RR^3\backslash\{0\}$. Let $\HH$ denote the space of $2\times2$ Hermitian matrices and let $L^2(\dom,\HH)$ be the Lebesgue space of square-integrable functions from $\dom$ to $\HH$, equipped with the inner product
\begin{align}
    \label{eq: InnerProductMatrix}
    \langle\MU,\MV\rangle = \int_{\dom}\Tr\left(\MU\MV\right)\dx\dk.
\end{align}
This is the natural space to consider for the metriplectic formulation since, for sufficiently smooth initial data, equation \eqref{eq: RTENoCouplingBracket} has a unique solution $\MW \in C^1([0,T],L^2(\dom,\HH))$ such that $\cbra{\Omega}{\MW}_M\in C^0([0,T];L^2(\dom,\HH))$ \cite{Bosboom2022}. 

By multiplying equation \eqref{eq: RTENoCouplingBracket} from the left with $\MU\in C^\infty_0(\dom,\HH)$, taking the trace and integrating over $\dom$ we obtain the equation
\begin{align*}
    \left\langle\MU,\fracp{\MW}{t}\right\rangle &= \Big\langle\MU,\cbra{\Omega}{\MW}_M\Big\rangle-\Big\langle\MU(x,k),\Sigma(x,k)\MW(x,k)\Big\rangle\\
    &+\left\langle\MU(x,k),\int_{|k|=|k'|}\sigma(x,k, k')\MT(x,k,k')\MW(x,k',t)\MT(x,k',k)\dk'\right\rangle.
\end{align*}
By applying integration by parts to the term $\langle\MU,\cbra{\Omega}{\MW}_M\rangle$ and observing that boundary terms vanish due to the compact support of $\MU$, we deduce that
\begin{multline}
    \label{eq: WeakFormRTE}
    \frac{d}{dt}\Big\langle\MU,\MW\Big\rangle = \Big\langle\MW,\cbra{\MU}{\Omega}_M\Big\rangle-\Big\langle\MU(x,k),\Sigma(x,k)\MW(x,k)\Big\rangle\\
    +\left\langle\MU(x,k),\int_{|k|=|k'|}\sigma(x,k, k')\MT(x,k,k')\MW(x,k',t)\MT(x,k',k)\dk'\right\rangle.
\end{multline}
The first term on the right-hand side of this equation is antisymmetric in $\MU$ and $\Omega$ and it is thus natural to define the bracket
\begin{align}
    \label{eq: Symp_Bracket_RTE}
    \pbra{\fa}{\fb} = \int_\dom\Tr\left(\MW\cbra{\fracd{\fa}{\MW}}{\fracd{\fb}{\MW}}_M\right)\dx\dk.
\end{align}
Here $\fracd{\fa}{\MW}$ is the functional derivative as defined in \eqref{eq:func_der}.
By applying the Fubini theorem to the last two terms on the right-hand side of equation \eqref{eq: WeakFormRTE} it can be seen that they are symmetric in $\MU$ and $\MW$ and thus it is natural to define the bracket that describes these terms by
\begin{multline}
    \label{eq: Metr_Bracket_RTE}
    \mbra{\fa}{\fb} = \frac{1}{2}\int_\dom\int_{|k|=|k'|}\sigma(x,k,k')\Tr\Bigg[\left(\MT(x,k,k')\fracd{\fa}{\MW'}-\fracd{\fa}{\MW}\MT(x,k,k')\right)\\
    \cdot\left(\MT(x,k',k)\fracd{\fb}{\MW}-\fracd{\fb}{\MW'}\MT(x,k',k)\right)\Bigg]\dk'\dk\dx.
\end{multline}
Here $\fracd{\fa}{\MW'}$ denotes the evaluation of $\fracd{\fa}{\MW}$ in $(x,k')$.
Using these brackets, equation \eqref{eq: WeakFormRTE} takes the form
\begin{align}
    \label{eq: Brackets_Separate}
    \frac{d\ff_\MU}{dt} = \pbra{\ff_\MU}{\fh}-\mbra{\ff_\MU}{\fs},
\end{align}
with the functionals
\begin{align}
    \label{eq: Test_Functional_RTE}
    \ff_\MU[\MW] &= \int_\dom\Tr\left(\MU\MW\right)\dx\dk,\\
    \label{eq: Hamiltonian_Functional_RTE}
    \fh[\MW] &= \int_\dom \Tr\left(\Omega\MW\right)\dx\dk,\\
    \label{eq: Entropy_Functional_RTE}
    \fs[\MW] &= -\frac{1}{2}\int_\dom\Tr\left(\MW^2\right)\dx\dk.
\end{align}
In the above, the functional derivatives are taken with respect to the inner product defined in \eqref{eq: InnerProductMatrix}. It can be shown that the bracket in \eqref{eq: Symp_Bracket_RTE} is a Poisson bracket:
\begin{proposition}
    \label{prop: Symp_Bracket}
    For sufficiently smooth functionals $\fa, \fb,\fc$ of $\MW$, the bracket in \eqref{eq: Symp_Bracket_RTE} is bilinear, antisymmetric. Additionally, it satisfies Leibniz's rule
    \begin{align*}
        \pbra{\fa\fb}{\fc} = \fa\pbra{\fb}{\fc}+\pbra{\fa}{\fc}\fb,
    \end{align*}
    and the Jacobi identity
    \begin{align*}
        \pbra{\pbra{\fa}{\fb}}{\fc}+\pbra{\pbra{\fb}{\fc}}{\fa}+\pbra{\pbra{\fc}{\fa}}{\fb}=0.
    \end{align*}
\end{proposition}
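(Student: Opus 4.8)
The bracket \eqref{eq: Symp_Bracket_RTE} is a Lie--Poisson bracket associated with the Lie bracket $\cbra{\cdot}{\cdot}_M$ of Proposition~\ref{prop: Matrix_Bracket}: the state $\MW$ plays the role of the momentum variable, and the inner product \eqref{eq: InnerProductMatrix} is used to represent a functional through its derivative $\fracd{\fa}{\MW}$, so that $\pbra{\fa}{\fb}=\langle\MW,\cbra{\fracd{\fa}{\MW}}{\fracd{\fb}{\MW}}_M\rangle$. Bilinearity and antisymmetry are then immediate: the map $\fa\mapsto\fracd{\fa}{\MW}$ is real linear by definition of the functional derivative and $\MU\mapsto\langle\MW,\MU\rangle$ is linear, so bilinearity of $\cbra{\cdot}{\cdot}_M$ composes with these linear maps to give bilinearity of $\pbra{\cdot}{\cdot}$, and antisymmetry of $\cbra{\cdot}{\cdot}_M$ immediately gives $\pbra{\fa}{\fb}=-\pbra{\fb}{\fa}$.

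For Leibniz's rule I would start from the product rule for functional derivatives $\fracd{(\fa\fb)}{\MW}=\fa\,\fracd{\fb}{\MW}+\fb\,\fracd{\fa}{\MW}$, which follows from $D(\fa\fb)=\fa\,D\fb+\fb\,D\fa$. The key point is that, for fixed $\MW$, the numbers $\fa[\MW]$ and $\fb[\MW]$ are constant on $\dom$, so they factor out of $\cbra{\cdot}{\cdot}_M$ (which involves only derivatives in $x$ and $k$) and out of the trace and integral in \eqref{eq: Symp_Bracket_RTE}. Hence $\cbra{\fracd{(\fa\fb)}{\MW}}{\fracd{\fc}{\MW}}_M=\fa\,\cbra{\fracd{\fb}{\MW}}{\fracd{\fc}{\MW}}_M+\fb\,\cbra{\fracd{\fa}{\MW}}{\fracd{\fc}{\MW}}_M$, and inserting this into \eqref{eq: Symp_Bracket_RTE} yields $\pbra{\fa\fb}{\fc}=\fa\pbra{\fb}{\fc}+\fb\pbra{\fa}{\fc}$.

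The Jacobi identity is the substantive part, and I would prove it by the classical Lie--Poisson argument. The first ingredient is the ad-invariance (cyclic symmetry) of the pairing,
\begin{align*}
  \langle\MW,\cbra{\MU}{\MV}_M\rangle=\langle\MU,\cbra{\MV}{\MW}_M\rangle=\langle\MV,\cbra{\MW}{\MU}_M\rangle,
\end{align*}
valid for sufficiently decaying Hermitian-matrix-valued arguments. Expanding $\langle\MW,\cbra{\MU}{\MV}_M\rangle$ into a sum of entrywise canonical brackets, this reduces to the cyclic identity $\int_\dom\cbra{f}{g}\,h\dx\dk=\int_\dom\cbra{g}{h}\,f\dx\dk$, which follows by integration by parts together with the vanishing of $\nabla_x\!\cdot\!\nabla_k g-\nabla_k\!\cdot\!\nabla_x g$; the boundary terms vanish under the decay/support hypotheses on the functionals, just as for the compactly supported test functions used to obtain \eqref{eq: WeakFormRTE}. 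The second ingredient is the functional derivative of the bracket itself: writing $A:=\fracd{\fa}{\MW}$, $B:=\fracd{\fb}{\MW}$, $C:=\fracd{\fc}{\MW}$ and letting $A'(\cdot)$ denote the derivative of $A$ with respect to $\MW$ --- a self-adjoint operator on $L^2(\dom,\HH)$ because the second functional derivative of $\fa$ is symmetric --- differentiation of $\pbra{\fa}{\fb}[\MW]=\langle\MW,\cbra{A}{B}_M\rangle$ gives
\begin{align*}
  \fracd{\pbra{\fa}{\fb}}{\MW}=\cbra{A}{B}_M+R_{\fa\fb},\qquad
  \langle R_{\fa\fb},\MV\rangle=\langle\MW,\cbra{A'(\MV)}{B}_M\rangle+\langle\MW,\cbra{A}{B'(\MV)}_M\rangle.
\end{align*}
Substituting into $\pbra{\pbra{\fa}{\fb}}{\fc}=\langle\MW,\cbra{\fracd{\pbra{\fa}{\fb}}{\MW}}{C}_M\rangle$ and summing over the three cyclic permutations of $(\fa,\fb,\fc)$, the part built only from first derivatives assembles into $\langle\MW,\cbra{\cbra{A}{B}_M}{C}_M+\cbra{\cbra{B}{C}_M}{A}_M+\cbra{\cbra{C}{A}_M}{B}_M\rangle$, which vanishes by the Jacobi identity of Proposition~\ref{prop: Matrix_Bracket}. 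For the remaining terms I would use cyclic symmetry to rewrite $\langle\MW,\cbra{R_{\fa\fb}}{C}_M\rangle=\langle R_{\fa\fb},\cbra{C}{\MW}_M\rangle$, then expand $R_{\fa\fb}$ and apply cyclic symmetry and antisymmetry of $\cbra{\cdot}{\cdot}_M$ once more, bringing the cyclic sum into the form of pairs $\langle A'(\cbra{B}{\MW}_M),\cbra{C}{\MW}_M\rangle-\langle A'(\cbra{C}{\MW}_M),\cbra{B}{\MW}_M\rangle$ (and cyclically in $A,B,C$); each such pair vanishes by self-adjointness of $A'(\cdot)$. Alternatively, since \eqref{eq: Symp_Bracket_RTE} depends on a functional only through its first derivative, one may invoke the standard reduction principle that it suffices to verify the Jacobi identity on the linear functionals $\ff_\MU$ of \eqref{eq: Test_Functional_RTE}, for which $R_{\fa\fb}$ vanishes identically and the identity collapses directly to Proposition~\ref{prop: Matrix_Bracket}.

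I expect the bookkeeping of the second-derivative terms $R_{\fa\fb}$ to be the main obstacle; this is the familiar delicate point in proving Jacobi for a Lie--Poisson bracket, and it genuinely requires both the symmetry of the second functional derivative and the ad-invariance of the pairing. The only other point needing attention is that every integration by parts entering the ad-invariance leaves no boundary contribution, which is exactly where the hypothesis ``sufficiently smooth functionals'', understood as a decay/support condition, is used.
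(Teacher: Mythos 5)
Your proposal is correct and takes essentially the same route as the paper's appendix proof (the Abraham--Marsden Lie--Poisson argument): bilinearity, antisymmetry and Leibniz are inherited from $\cbra{\cdot}{\cdot}_M$ and the product rule for functional derivatives, and the Jacobi identity follows by differentiating the bracket, annihilating the first-derivative terms via Proposition~\ref{prop: Matrix_Bracket} and cancelling the second-derivative terms by symmetry of $D^2$. The only cosmetic difference is that you identify the adjoint ${\rm ad}(\cdot)^*\MW$ explicitly as a bracket with $\MW$ through ad-invariance of the pairing (integration by parts, with vanishing boundary terms), whereas the paper keeps ${\rm ad}^*$ abstract and therefore never needs that step.
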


\begin{proof}
    Bilinearity and antisymmetry of this bracket follow directly from the properties of the bracket in \eqref{eq: Matrix_Bracket}. Furthermore, the Leibniz rule follows from the product rule for functional derivatives
    \begin{align*}
        \fracd{\fa\fb}{\MW} = \fa\fracd{\fb}{\MW}+\fracd{\fa}{\MW}\fb.
    \end{align*}
    The Jacobi identity follows from the Jacobi identity for $\cbra{\cdot}{\cdot}_M$ and the procedure given in \cite[p. 612]{Abraham1988}. \changecolor{black}\hyperlink{Proof_Symp_Jacobi}{For self-containedness we prove the Jacobi identity in the appendix.}\changecolor{blue}
\end{proof}
Similarly it can be shown that \eqref{eq: Metr_Bracket_RTE} defines a metric bracket:
\begin{proposition}
    The bracket in \eqref{eq: Metr_Bracket_RTE} is bilinear, symmetric, and negative semidefinite.
\end{proposition}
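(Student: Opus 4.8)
The plan is to verify the three properties directly from the definition \eqref{eq: Metr_Bracket_RTE}. Bilinearity is the easy part: the map $\fa\mapsto\fracd{\fa}{\MW}$ is linear, and the integrand is the trace of a product of two factors, the first depending linearly on $\fracd{\fa}{\MW}$ and $\fracd{\fa}{\MW'}$ and the second linearly on $\fracd{\fb}{\MW}$ and $\fracd{\fb}{\MW'}$; linearity of the trace and of the integral then give bilinearity in each argument.

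For symmetry, I would relabel the inner integration variables by swapping $k$ and $k'$. The integration domain $\{|k|=|k'|\}$ and the measure $\dk'\dk$ are invariant under this swap, and $\sigma(x,k,k')=\sigma(x,k',k)$ by hypothesis. Under the relabeling, $\fracd{\fa}{\MW}$ and $\fracd{\fa}{\MW'}$ are exchanged, $\MT(x,k,k')$ and $\MT(x,k',k)$ are exchanged, and likewise for $\fb$. Applying the cyclic invariance $\Tr(XY)=\Tr(YX)$ to the relabeled integrand then turns it into exactly the integrand defining $\mbra{\fb}{\fa}$, so that $\mbra{\fa}{\fb}=\mbra{\fb}{\fa}$.

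For negative semidefiniteness I would set $\fb=\fa$ and abbreviate $T=\MT(x,k,k')$, so that $\MT(x,k',k)=T^{*}$ by the assumed symmetry of $\MT$. One preliminary remark is needed: since the inner product \eqref{eq: InnerProductMatrix} is the (real) trace pairing on $L^2(\dom,\HH)$, the representative $\fracd{\fa}{\MW}$ of $D\fa$ is an element of $L^2(\dom,\HH)$ and is therefore Hermitian-valued, and the same holds at $(x,k')$ for $\fracd{\fa}{\MW'}$. Now set $M=\fracd{\fa}{\MW}\,T-T\,\fracd{\fa}{\MW'}$. The first factor in the integrand of \eqref{eq: Metr_Bracket_RTE} is then $-M$, and, using $T^{**}=T$ together with the Hermiticity of $\fracd{\fa}{\MW}$ and $\fracd{\fa}{\MW'}$, the second factor is $M^{*}$. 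Hence the bracketed trace equals $-\Tr(MM^{*})=-\sum_{i,j}|M_{ij}|^{2}\le 0$ pointwise in $(x,k,k')$; since $\sigma>0$, the whole integrand is nonpositive, and $\mbra{\fa}{\fa}\le 0$ follows.

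I expect no genuine obstacle here. The only point requiring care is keeping track of which matrix is evaluated at $k$ and which at $k'$, and using $\MT(x,k,k')=\MT(x,k',k)^{*}$ and the Hermiticity of the functional derivatives consistently, so that in the definiteness step the two factors of the integrand really do collapse to $-M$ and $M^{*}$ and the trace becomes a sum of squared moduli.
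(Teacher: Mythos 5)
Your proof is correct, and for two of the three properties it follows essentially the paper's route: the paper likewise treats bilinearity as immediate and obtains symmetry by applying Fubini and relabeling $k\leftrightarrow k'$ (your additional remark that one also needs cyclicity of the trace, $\Tr(XY)=\Tr(YX)$, to recognize the relabeled integrand as that of $\mbra{\fb}{\fa}$ is a detail the paper leaves implicit, but it is the same argument). The genuine difference is in negative semidefiniteness: the paper simply cites \cite[Proposition 4.5]{Bosboom2022}, whereas you give a self-contained computation, writing the integrand for $\fb=\fa$ as $-\sigma\,\Tr\left(MM^*\right)$ with $M=\fracd{\fa}{\MW}\MT(x,k,k')-\MT(x,k,k')\fracd{\fa}{\MW'}$, which is pointwise nonpositive since $\Tr\left(MM^*\right)=\sum_{i,j}\abs{M_{ij}}^2$. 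Your appeal to the Hermiticity of the functional derivatives is legitimate and is indeed the crux: because $\fracd{\fa}{\MW}$ is by definition the representative of $D\fa$ in $L^2(\dom,\HH)$, it is Hermitian-valued, and combined with $\MT(x,k',k)=\MT(x,k,k')^*$ this identifies the second factor of the integrand with $M^*$. What your version buys is a short, reference-free proof that also exhibits the stronger pointwise statement that the integrand is nonpositive for every $(x,k,k')$; the paper's citation trades this transparency for brevity.
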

\begin{proof}
    The bracket in \eqref{eq: Metr_Bracket_RTE} is clearly bilinear. Symmetry follows by applying the Fubini theorem to the $k$ and $k'$ integrals and relabeling the variables $k$ and  $k'$. Furthermore, negative-semidefiniteness of the bracket has been shown in \cite[Proposition 4.5]{Bosboom2022}.
\end{proof}
The final step in the construction of the metriplectic formulation of the RTE is to verify that $\fh$ and $\fs$ defined in \eqref{eq: Hamiltonian_Functional_RTE} and \eqref{eq: Entropy_Functional_RTE}, respectively, are Casimirs of the corresponding brackets, which is established next.
\begin{proposition}
    \label{prop: Casimirs}
    The entropy functional $\fs$ is a Casimir of the Poisson bracket and the Hamiltonian functional $\fh$ is a Casimir of the metric bracket.
\end{proposition}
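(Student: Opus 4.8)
Below is the plan I would follow to prove Proposition~\ref{prop: Casimirs}.

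The plan is to verify the two Casimir identities separately, after first computing the functional derivatives of $\fs$ and $\fh$ with respect to the inner product \eqref{eq: InnerProductMatrix}. Since $\fh$ in \eqref{eq: Hamiltonian_Functional_RTE} is linear in $\MW$, its derivative is simply $D\fh[\MW](\MV) = \langle\Omega,\MV\rangle$, so $\fracd{\fh}{\MW} = \Omega = \omega I_2$. For $\fs$ in \eqref{eq: Entropy_Functional_RTE}, expanding $\Tr\big((\MW+\varepsilon\MV)^2\big)$ and using the cyclicity of the trace gives $D\fs[\MW](\MV) = \int_\dom\Tr(\MW\MV)\dx\dk = \langle\MW,\MV\rangle$, hence $\fracd{\fs}{\MW} = \MW$; both derivatives are Hermitian-valued, as required.

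For the first claim, $\pbra{\fs}{\fa} = 0$, substituting $\fracd{\fs}{\MW} = \MW$ into \eqref{eq: Symp_Bracket_RTE} reduces it to showing $\int_\dom\Tr\big(\MW\cbra{\MW}{\MU}_M\big)\dx\dk = 0$ for every sufficiently smooth, sufficiently fast decaying $\MU$ (the class to which the relevant $\fracd{\fa}{\MW}$ belongs). The key ingredient is the integration-by-parts rule for the canonical bracket \eqref{eq: cbra}: since $\cbra{f}{g} = \nabla_x\cdot(f\,\nabla_k g) - \nabla_k\cdot(f\,\nabla_x g)$, one has $\int_\dom f\,\cbra{g}{h}\dx\dk = \int_\dom\cbra{f}{g}\,h\dx\dk$ whenever the boundary contributions (at spatial infinity and near $|k| = 0$) vanish. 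Applying this entrywise to the definition \eqref{eq: Matrix_Bracket} and relabelling the matrix summation indices, I would establish an ad-invariance of the trace pairing under the Lie bracket, namely
\begin{align*}
    \int_\dom\Tr\big(\MX\cbra{\MY}{\MU}_M\big)\dx\dk = \int_\dom\Tr\big(\cbra{\MX}{\MY}_M\,\MU\big)\dx\dk ,
\end{align*}
for all such $\MX,\MY,\MU$. Taking $\MX = \MY = \MW$ and $\MU = \fracd{\fa}{\MW}$, the right-hand side contains $\cbra{\MW}{\MW}_M$, which vanishes by the antisymmetry of $\cbra{\cdot}{\cdot}_M$ from Proposition~\ref{prop: Matrix_Bracket}; hence $\pbra{\fs}{\fa} = 0$. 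Concretely, this is the observation that, after moving all $x$- and $k$-derivatives off the two factors of $\MW$, the two index sums produced by the two terms in \eqref{eq: Matrix_Bracket} cancel each other by the antisymmetry of \eqref{eq: cbra}.

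For the second claim, $\mbra{\fh}{\fa} = 0$, I substitute $\fracd{\fh}{\MW} = \Omega = \omega I_2$ into the integrand of \eqref{eq: Metr_Bracket_RTE}. Since a scalar matrix commutes with $\MT$, the second factor becomes
\begin{align*}
    \MT(x,k',k)\,\fracd{\fh}{\MW} - \fracd{\fh}{\MW'}\,\MT(x,k',k) = \big(\omega(x,k) - \omega(x,k')\big)\,\MT(x,k',k) ,
\end{align*}
and the inner integral in \eqref{eq: Metr_Bracket_RTE} runs over the sphere $|k'| = |k|$, on which $\omega(x,k) = v(x)|k| = v(x)|k'| = \omega(x,k')$ by the form of the dispersion relation; thus this factor vanishes identically on the domain of integration and $\mbra{\fh}{\fa} = 0$ for every $\fa$. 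This part is purely structural and poses no real difficulty. The only delicate point I anticipate is the analytic justification of the integration-by-parts step in the first claim: it requires the admissible functionals to be such that $\MW$ and $\fracd{\fa}{\MW}$ generate no boundary contributions, at spatial infinity or near $|k| = 0$ — equivalently, one may restrict to functionals with $\fracd{\fa}{\MW} \in C^\infty_0(\dom,\HH)$, as in the weak formulation \eqref{eq: WeakFormRTE}. The accompanying index bookkeeping in \eqref{eq: Matrix_Bracket}, while tedious, is routine.
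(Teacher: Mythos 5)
Your proposal is correct and follows essentially the same route as the paper: compute $\fracd{\fs}{\MW}=\MW$ and $\fracd{\fh}{\MW}=\omega I_2$, kill the Poisson bracket by integration by parts plus antisymmetry, and kill the metric bracket because $\omega(x,k)=v(x)|k|$ is constant on the sphere $|k'|=|k|$ so the factor $(\omega(x,k)-\omega(x,k'))\MT$ vanishes. The only difference is presentational: you package the integration-by-parts step as a general adjointness (ad-invariance) identity $\int\Tr(\MX\cbra{\MY}{\MU}_M)=\int\Tr(\cbra{\MX}{\MY}_M\MU)$ and then invoke $\cbra{\MW}{\MW}_M=0$, whereas the paper expands the trace in components and cancels terms pairwise — the same mechanism, with the same caveat about vanishing boundary terms that you correctly flag.
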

\begin{proof}
    \changecolor{black}\hyperlink{Proof_Casimirs}{The proof can be found in the appendix.}\changecolor{blue}
\end{proof}
Employing Proposition~\ref{prop: Casimirs}, equation \eqref{eq: Brackets_Separate} can be written as
\begin{align}
    \label{eq: RTE_Metriplectic_Formulation}
    \frac{d\ff_\MU}{dt} = \pbra{\ff_\MU}{\fg}+\mbra{\ff_\MU}{\fg}
\end{align}
with the free energy functional
\begin{align}
    \label{eq: Free_Energy_Functional_RTE}
    \fg[\MW] = \fh[\MW]-\fs[\MW].
\end{align}
These brackets provide a proper metriplectic formulation of the radiative transfer equation with conservation of the Hamiltonian and dissipation of the entropy following as in Section~\ref{sec: Metriplectic_Formalism}.
\begin{remark}
    We note that in the vacuum case, i.e., $\sigma=0$ and $v$ constant, the metric bracket in \eqref{eq: Metr_Bracket_RTE} vanishes, and equation \eqref{eq: RTE_Metriplectic_Formulation} contains only Hamiltonian dynamics.\\
    Furthermore, we can consider the diffusion limit, which is valid for propagation distances that are large compared to the mean free path $v(x)/\Sigma$. In this limit the trace of $\MW(x,k,t)$ converges, for constant $v$, to a scalar function $\phi(x,r,t)$ satisfying the diffusion equation \cite[(5.29)]{Ryzhik1996}
    \begin{align*}
        \fracp{\phi}{t} = \nabla_x\cdot\left(D\nabla_x\phi\right),
    \end{align*}
    with diffusion coefficient $D(x,r)>0$. This can be described as a purely dissipative system
    \begin{align*}
        \fracp{\ff}{t} = \mbra{\ff}{\fg_D}_D,
    \end{align*}
    with metric bracket
    \begin{align*}
        \mbra{\fa}{\fb}_D = -\int_{\RR^3}\int_0^\infty D\nabla_x\fracd{\fa}{\phi}\cdot\nabla_x\fracd{\fb}{\phi}\dx\, r^2\dr,
    \end{align*}
    and a free energy functional $\fg_D = \fh_D-\fs_D$, with
    \begin{align*}
        &\fh_D[\phi] = \int_{\RR^3}\int_0^\infty\phi\dx\, r^2\dr, \qquad\fs_D[\phi] = -\frac{1}{2}\int_{\RR^3}\int_0^\infty\phi^2\dx\, r^2\dr.
    \end{align*}
\end{remark}
\subsection{Scalar radiative transfer}\label{sec:scalar}
If we consider only unpolarized light, the Stokes parameters $Q,U$ and $V$ are identically zero and the coherence matrix can be written as
\begin{align*}
    \MW(x,k,t) = \frac{1}{2}\begin{pmatrix}
    I(x,k,t) & 0\\
    0 & I(x,k,t)
    \end{pmatrix}.
\end{align*}
In this case equation \eqref{eq: RTENoCouplingBracket} reduces to the scalar radiative transfer equation for the intensity $I$ by taking traces, i.e.,
\begin{align}
    \label{eq: RTE_scalar}
    \fracp{I}{t}+\cbra{I}{\omega} = \int_{|k|=|k'}\sigma'(x,k,k')I(x,k',t)\dk' -\Sigma'(x,k)I(x,k,t),
\end{align}
with
\begin{align*}
    \sigma'(x,k,k') &= \frac{1}{2}\sigma(x,k,k')\Tr\left[\MT(x,k,k')\MT(x,k',k)\right],\\
    \Sigma'(x,k) &= \int_{|k|=|k'|}\sigma'(x,k,k')\dk'.
\end{align*}
In this case the functionals \eqref{eq: Hamiltonian_Functional_RTE}, \eqref{eq: Entropy_Functional_RTE} and \eqref{eq: Free_Energy_Functional_RTE} reduce to
\begin{align}
    \fh[\MW] &= \int_\dom\Tr\left(\Omega\MW\right)\dx\dk = \int_\dom\omega(x,k)I(x,k,t)\dx\dk = \tilde\fh[I],\\
    \fs[\MW] &= -\frac{1}{2}\int_\dom\Tr\left(\MW^2\right)\dx\dk = -\frac{1}{4}\int_\dom I^2\dx\dk = \tilde\fs[I],\\
    \tilde\fg[I] &= \tilde\fh[I]-\tilde\fs[I].
\end{align}
Similarly, setting $\MU=\varphi I_2$ for $\varphi\in C^\infty_0(\dom)$,
the functional $\ff_\MU$ defined in equation \eqref{eq: Test_Functional_RTE} reduces to
\begin{align}
    \ff_\MU[\MW] = \int_\dom\Tr\left(\MU\MW\right)\dx\dk = \int_\dom \varphi(x,k)I(x,k,t)\dx\dk = \tilde\ff_\varphi[I].
\end{align}
Here, $\fracd{\tilde\ff[I]}{I}$ denotes the functional derivative w.r.t.\@ the usual $L^2$-inner product on $\dom$. In the case of unpolarized light the brackets in \eqref{eq: Symp_Bracket_RTE} and \eqref{eq: Metr_Bracket_RTE} reduce to
\begin{align}
    &\pbra{\fa}{\fb}_s = \int_\dom I\cbra{\fracd{\fa}{I}}{\fracd{\fb}{I}}\dx\dk,\\
    &\mbra{\fa}{\fb}_s = \int_\dom \sigma'(x,k,k')\left(\fracd{\fa}{I}-\fracd{\fa}{I'}\right)\left(\fracd{\fb}{I'}-\fracd{\fb}{I}\right)\dk'\dk\dx,
\end{align}
and the metriplectic formulation of \eqref{eq: RTE_scalar} becomes
\begin{align*}
    \frac{d\tilde\ff_\varphi}{dt} = \pbra{\tilde\ff_\varphi}{\tilde\fg}_s +\mbra{\tilde\ff_\varphi}{\tilde\fg}_s,
\end{align*}
which is in full agreement with the formulation derived in \cite{Schlottbom2021}.

\section{Optical rotation}
\label{sec: Coupling}

This section is concerned with the procedure of eliminating the optical rotation term in equation \eqref{eq: RTE}. For the convenience of the reader we first recall the key aspects of the  derivation of the radiative transfer equation from Maxwell's equations and discuss the physical meaning of the solution.

\subsection{Derivation of the radiative transfer equation from Maxwell's equation}
Following \cite{Ryzhik1996}, equation \eqref{eq: RTE} can be derived from a transformation of the Maxwell equations for the electric field $E$ and the magnetic field $H$:
\begin{align}
    \begin{pmatrix}
        \epsilon & 0\\
        0 & \mu
    \end{pmatrix}
    \fracp{}{t}
    \begin{pmatrix}
        E\\
        H
    \end{pmatrix}+
    \begin{pmatrix}
        0&-\nabla\times\\
        \nabla\times & 0
    \end{pmatrix}
    \begin{pmatrix}
        E\\
        H
    \end{pmatrix}=0.
\end{align}
Using the vector $u=\begin{pmatrix} E\\H
\end{pmatrix}$ this equation can be cast into the standard form of a symmetric first-order hyperbolic system
\begin{align}
    A(x)\fracp{u}{t}+\sum_{i=1}^3D^i\fracp{u}{x_i}=0.
\end{align}
In the procedure of deriving the RTE, the limiting Wigner matrix $\MW^{(0)}$ is defined as a high-frequency Wigner transformation of $u$:
\begin{align}
    \label{eq: limiting_wigner_matrix}
    \MW^{(0)}(x,k,t) = \lim_{\varepsilon\to 0}\left(\frac{1}{2\pi}\right)^3\int_{\RR^3}\rme^{\rmi k\cdot y}u(t,x-\frac{1}{2}\varepsilon y)u(t,x+\frac{1}{2}\varepsilon y)^*\, \rmd y.
\end{align}
Since the electric and magnetic fields corresponding to a light wave always lie in the plane perpendicular to the direction of propagation $\hat{k}= {k}/{|k|}$. They are respectively normal and binormal vectors to $\hat{k}$. By choosing a pair of orthonormal vectors $\left\{z^{(1)}(x,k), z^{(2)}(x,k)\right\}$ that span the plane perpendicular to $\hat{k}$, the limiting Wigner matrix can be expanded as
\begin{align*}
    \MW^{(0)}(x,k,t) = \sum_{i,j=1}^2a_{ij}^+B^{i,j}_++a_{ij}^-B^{i,j}_- = \sum_{i,j=1}^2a_{ij}^+b^{(i)}_+b^{(j)^T}_++a_{ij}^-b^{(i)}_-b^{(j)^T}_-,
\end{align*}
with
\begin{align*}
    &b^{(1)}_+(x,k) = \left(\sqrt{\frac{1}{2\epsilon}}z^{(1)}, \sqrt{\frac{1}{2\mu}}z^{(2)}\right), \qquad b^{(2)}_+(x,k) = \left(\sqrt{\frac{1}{2\epsilon}}z^{(2)}, -\sqrt{\frac{1}{2\mu}}z^{(1)}\right),\\
    &b^{(1)}_-(x,k) = \left(\sqrt{\frac{1}{2\epsilon}}z^{(1)}, -\sqrt{\frac{1}{2\mu}}z^{(2)}\right), \qquad b^{(2)}_-(x,k) = \left(\sqrt{\frac{1}{2\epsilon}}z^{(2)}, \sqrt{\frac{1}{2\mu}}z^{(1)}\right).
\end{align*}
Here the $+$ and $-$ indicate right and left propagation modes, respectively. Furthermore, the matrices $B_\pm^{i,j}$ are orthonormal in the following sense
\begin{align*}
    \Tr\left(AB^{i,j}_sAB^{k,l}_t\right) = \delta_{i,k}\delta_{j,l}\delta_{s,t}.
\end{align*}
Therefore, the coefficients $a^\pm_{i,j}(x,k,t)$ can be determined as
\begin{align*}
    a^\pm_{ij} = \Tr\left(A\MW^{(0)*}AB^{ij}_\pm\right).
\end{align*}
By symmetry (see \cite{Ryzhik1996} for the full details) the left propagating modes can be derived from the right propagation modes, so we are only interested in the part
\begin{align*}
    \MW^{(0)}_+(x,k,t) = \sum_{i,j=1}^2a_{ij}^+B^{ij}_+.
\end{align*}
The expansion coefficients $a_{ij}^+(x,k,t)$ determine the coherence matrix:
\begin{align}\label{eq: def_coherence}
    \MW_{i,j}(x,k,t) = a_{ij}^+(x,k,t) .
\end{align}
The coherence matrix then satisfies \eqref{eq: RTE} with the optical rotation term given by
\begin{align}
    \label{eq: Coupling_General}
    \MN_{mn}(x,k) &= \left(b^{(n)}_+,D^i\frac{\partial b^{(m)}_+}{\partial x^i}\right) - \frac{\partial \omega}{\partial x^i}\left(A(x)b^{(n)}_+,\frac{\partial b^{(m)}_+}{\partial k_i}\right) -\frac{1}{2}\frac{\partial^2\omega}{\partial x^i\partial k_i}\delta_{nm},
\end{align}
where $\left(\cdot,\cdot\right)$ denotes the inner product
\begin{align*}
    \left(u,v\right) = \sum_{i=1}^6u_i(x,k)v_i(x,k).
\end{align*}
From this derivation we can conclude that both the coherence matrix $\MW$ and the optical rotation matrix $\MN$ depend on the choice of basis $\left\{z^{(1)}(x,k), z^{(2)}(x,k)\right\}$ of the plane perpendicular to $\hat{k}$. However, many quantities derived from $\MW$ are independent of this choice:

\begin{theorem}Let $I,Q,U,V$ be the Stokes parameter components of $\MW$ in the basis determined by the vectors $z^{(1)}(x,k)$ and $z^{(2)}(x,k)$ as defined in \eqref{eq: Stokes}. Then for any continuous function $f:\RR^3\to\RR$ it holds that $f(I,Q^2+U^2,V)$ is invariant under a rotation of $z^{(1)}(x,k)$ and $z^{(2)}(x,k)$ in the plane perpendicular to $\hat{k}$.
\end{theorem}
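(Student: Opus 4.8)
The plan is to reduce the claim to elementary linear algebra of $2\times2$ matrices. The key observation is that a rotation of the orthonormal pair $\{z^{(1)}(x,k),z^{(2)}(x,k)\}$ by an angle $\theta$ (which may depend on $(x,k)$) in the plane perpendicular to $\hat{k}$ induces a conjugation
\[
    \MW \;\longmapsto\; R\,\MW\,R^{T},\qquad R=R(\theta)\in\mathrm{SO}(2),
\]
of the coherence matrix, and that each of $I$, $Q^{2}+U^{2}$ and $V$ is left unchanged by such a conjugation. Since the claimed invariance is pointwise in $(x,k)$, it suffices to work at a fixed $(x,k)$.

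First I would establish the transformation law for $\MW$ by tracking the change of frame through the Wigner expansion recalled in Section~\ref{sec: Coupling}. Writing $\tilde z^{(i)}=\sum_{j}R_{ij}z^{(j)}$ with $R\in\mathrm{SO}(2)$ and substituting into the defining formulas for $b^{(1)}_{+},b^{(2)}_{+}$, a direct computation shows $\tilde b^{(i)}_{+}=\sum_{j}R_{ij}b^{(j)}_{+}$, whence $\tilde B^{ij}_{+}=\tilde b^{(i)}_{+}\,(\tilde b^{(j)}_{+})^{T}=\sum_{m,n}R_{im}R_{jn}B^{mn}_{+}$. Because the limiting Wigner matrix $\MW^{(0)}$ of \eqref{eq: limiting_wigner_matrix} and the matrix $A(x)$ do not involve the frame at all, the expansion coefficients $a^{+}_{ij}=\Tr(A\MW^{(0)*}AB^{ij}_{+})$ satisfy $\tilde a^{+}_{ij}=\sum_{m,n}R_{im}R_{jn}a^{+}_{mn}$, that is $\tilde a^{+}=R\,a^{+}\,R^{T}$; by \eqref{eq: def_coherence} this is exactly $\tilde\MW=R\,\MW\,R^{T}$.

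Next I would read off from \eqref{eq: Stokes} the decomposition $\MW=\tfrac12 I\,I_{2}+M+\tfrac{\rmi}{2}V\MJ$, where $M=\tfrac12(\MW+\MW^{T})-\tfrac12(\Tr\MW)I_{2}$ is symmetric and trace-free with $M_{11}=-M_{22}=\tfrac12 Q$ and $M_{12}=M_{21}=\tfrac12 U$, so that $I=\Tr\MW$, $Q^{2}+U^{2}=-4\det M$ and $\tfrac{\rmi}{2}V\MJ=\tfrac12(\MW-\MW^{T})$. Conjugation by the real orthogonal matrix $R$ commutes with transposition and preserves traces, hence preserves each piece of this decomposition; moreover $\det(RMR^{T})=\det M$ since $\det R=1$, and $R\MJ R^{T}=\MJ$ because $\MJ\in\mathrm{SO}(2)$ and $\mathrm{SO}(2)$ is abelian. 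Therefore the Stokes parameters read off from $R\MW R^{T}$ satisfy $\tilde I=I$, $\tilde V=V$ and $\tilde Q^{2}+\tilde U^{2}=Q^{2}+U^{2}$ (one may also compute the classical rotation law $\tilde Q=Q\cos2\theta+U\sin2\theta$, $\tilde U=-Q\sin2\theta+U\cos2\theta$ directly), and applying $f$ gives $f(\tilde I,\tilde Q^{2}+\tilde U^{2},\tilde V)=f(I,Q^{2}+U^{2},V)$.

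I expect the only substantive step to be the first one: checking, from the construction in \cite{Ryzhik1996}, that a frame rotation acts on the coherence matrix exactly by conjugation with the $2\times2$ rotation matrix $R(\theta)$. The identity $\tilde b^{(i)}_{+}=\sum_{j}R_{ij}b^{(j)}_{+}$ is where one has to be careful about the placement of the scalar factors $1/\sqrt{2\epsilon},1/\sqrt{2\mu}$ and about keeping the right-propagating modes separate from the left-propagating ones. It is also worth stressing that the restriction to rotations is genuinely needed: a reflection of $\{z^{(1)},z^{(2)}\}$ would interchange $b^{(i)}_{+}$ and $b^{(i)}_{-}$ and reverse the sign of $V$, so it lies deliberately outside the scope of the statement.
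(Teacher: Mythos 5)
Your proposal is correct and follows essentially the same route as the paper: you track the frame rotation through the Wigner expansion, using that $\MW^{(0)}$ and $A(x)$ are frame-independent together with the orthonormality of the $B^{ij}_+$, to get the transformation of the coefficients $a^+_{ij}$, exactly as in the paper's proof. The only difference is presentational: you package the coefficient transformation as the conjugation $\MW\mapsto R\MW R^{T}$ and finish via conjugation invariants (trace, determinant of the symmetric trace-free part, and $R\MJ R^{T}=\MJ$), whereas the paper writes out the $4\times4$ transformation and the explicit Mueller rotation law $I'=I$, $Q'=\cos(2\theta)Q+\sin(2\theta)U$, $U'=-\sin(2\theta)Q+\cos(2\theta)U$, $V'=V$, which your conjugation law reproduces.
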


\begin{proof}
    It suffices to show that the quantities $I,V$ and $Q^2+U^2$ are independent under rotations of the basis. Under a rotation of angle $\theta(x,k)$ in the plane perpendicular to $\hat{k}$ the vectors $z^{(1)}(x,k)$ and $z^{(2)}(x,k)$ transform into  vectors $z^{(1)'}(x,k)$ and $z^{(2)'}(x,k)$ as
\begin{align*}
    &z^{(1)'}(x,k) = \cos(\theta) z^{(1)}+\sin(\theta) z^{(2)},\\
    &z^{(2)'}(x,k) = -\sin(\theta) z^{(1)}+\cos(\theta) z^{(2)}.
\end{align*}
To simplify the notation we write the indices as $(B^{11}_+,B^{12}_+,B^{21}_+,B^{22}_+) = (B^1_+,B^2_+,B^3_+,B^4_+)$ and $(a^+_{11},a^+_{12},a^+_{21},a^+_{22}) = (a^+_1,a^+_2,a^+_3,a^+_4)$. 
From the above transformation rule and the definition of the corresponding matrices $B^{i}_+$ and $B^{i'}_+$, we obtain the relation
\begin{align*}
    \begin{pmatrix}
    B^{1'}_+\\
    B^{2'}_+\\
    B^{3'}_+\\
    B^{4'}_+
    \end{pmatrix}=
    (R
    \otimes I_6)
    \begin{pmatrix}
    B^{1}_+\\
    B^{2}_+\\
    B^{3}_+\\
    B^{4}_+
    \end{pmatrix},
\end{align*}
with the matrix
\begin{align*}
    R=\begin{pmatrix}
    \cos^2\theta & \cos\theta\sin\theta & \cos\theta\sin\theta & \sin^2\theta\\
    -\sin\theta\cos\theta & \cos^2\theta & -\sin^2\theta & \sin\theta\cos\theta\\
    -\sin\theta\cos\theta & -\sin^2\theta & \cos^2\theta & \sin\theta\cos\theta\\
    \sin^2\theta & -\sin\theta\cos\theta & -\sin\theta\cos\theta & \cos^2\theta
    \end{pmatrix}.
\end{align*}
The limiting Wigner matrix, however, is by definition independent of the choice of the basis and can be expanded in the transformed basis as
\begin{align*}
    \MW^{(0)}(x,k,t) = \sum_{i}^4a_{i}^{+'}B^{i'}_{+}+a_{i}^{-'}B^{i'}_{-}.
\end{align*}
Then the coefficients $a_{i}^{+'}(x,k,t)$ are determined by
\begin{align*}
    a_{i}^{+'} = \Tr\left(A\MW^{(0)*}AB^{i'}_+\right) &= \Tr\left(A\MW^{(0)*}_+AB^{i'}_+\right) = \sum_{k,l=1}^4a_{k}^+R_{i,l}\Tr\left(AB^{k}_+A B^{l}_+\right)\\
    &=\sum_{k,l=1}^4a_{k}^+R_{i,l}\delta_{k,l} = \sum_{k=1}^4R_{i,k}a_k^+.
\end{align*}
The Stokes parameters after the transformation are defined as, cf. \eqref{eq: Stokes} and \eqref{eq: def_coherence},
\begin{align*}
    &I' = a_{1}^{+'}+a_{4}^{+'}, \qquad Q' = a_{1}^{+'}-a_{4}^{+'}\\
    &U' = a_{2}^{+'}+a_{3}^{+'}, \qquad V' = \frac{1}{\rmi}(a_{2}^{+'}-a_{3}^{+'}),
\end{align*}
the transformation law for the Stokes parameters can then be written as
\begin{align*}
    \begin{pmatrix}
    I'\\
    Q'\\
    U'\\
    V'
    \end{pmatrix}=
    \begin{pmatrix}
    1 & 0 & 0 &0\\
    0 & \cos(2\theta) & \sin(2\theta) & 0\\
    0 & -\sin(2\theta) & \cos(2\theta) & 0\\
    0 & 0 & 0 &1
    \end{pmatrix}
    \begin{pmatrix}
    I\\
    Q\\
    U\\
    V
    \end{pmatrix}.
\end{align*}
The theorem then follows immediately from this transformation law.
\end{proof}
\subsection{Eliminating the optical rotation term}
Now that we know how equation \eqref{eq: RTE} transforms under a change of basis $\left\{z^{(1)}(x,k), z^{(2)}(x,k)\right\}$, we will next construct a particular basis such that the optical rotation matrix $\MN$ defined in \eqref{eq: Coupling_General} vanishes.

It can be seen that the third term on the right side of equation \eqref{eq: Coupling_General} is a diagonal term. This term does not affect equation \eqref{eq: RTE} since $\MN$ enters there in a commutator. It was added in \cite{Ryzhik1996} for convenience because it cancels the first term on the right-hand side of \eqref{eq: Coupling_General} when the basis functions $z^{(1)}(x,k)$ and $z^{(2)}(x,k)$ are chosen as a pair of orthonormal vectors perpendicular to the vectors $\hat{k}$.
Hence, by expanding the remaining term in \eqref{eq: Coupling_General}, it can be seen that
\begin{align*}
    \MN_{ij}(x,k) = -\frac{1}{2}\sum_{l,m=1}^2(\delta_{jl}\delta_{im}+\epsilon_{jl}\epsilon_{im})z^{(l)}(x,k)\cdot\left(\nabla_xv(x)\cdot \nabla_k\right)z^{(m)}(x,k),
\end{align*}
with $\epsilon_{ij}$ the Levi-Civita symbol. Since $z^{(1)}(x,k)$ and $z^{(2)}(x,k)$ must be perpendicular to $\hat{k}$, the rotation term vanishes if both $\fracp{z^{(1)}}{s}$ and $\fracp{z^{(2)}}{s}$ are parallel to $\hat{k}$, with the material derivative $\fracp{}{s}$ given by
\begin{align*}
    \fracp{}{s} := (\nabla_xv(x)\cdot\nabla_k).
\end{align*}
To construct these vectors we follow the procedure in \cite{Lewis1966}. We first construct a Frenet-Serret frame $(T,N,B)$, with the tangent vector $T=\hat{k}$ and the normal vector $N$ and binormal vector $B$ defined by
\begin{align*}
    &N := \frac{1}{\kappa}\frac{\partial T}{\partial s}, \qquad B := T\times N,
\end{align*}
which satisfy the equations
\begin{align*}
    &\frac{\partial N}{\partial s} = -\kappa T+\tau B, \qquad\frac{\partial B}{\partial s} = -\tau N,
\end{align*}
with $\kappa = |\frac{\partial T}{\partial s}|$ the curvature and $\tau$ the torsion of the curve. The vectors $N$ and $B$ are themselves not the correct choice of basis, but from the Frenet-Serret frame $(T,N,B)$ we can construct a Darboux frame $(T,P,Q)$ as
\begin{align*}
    &P := \cos\alpha N+\sin\alpha B,\\
    &Q := -\sin\alpha N+\cos\alpha B,
\end{align*}
and by choosing $\alpha$ such that $\frac{\partial \alpha}{\partial s} = -\tau$ it holds that
\begin{align*}
    &\frac{\partial P}{\partial s} = -\kappa\cos\alpha T,\\
    &\frac{\partial Q}{\partial s} = \kappa\sin\alpha T.
\end{align*}
Since $(T,P,Q)$ forms an orthonormal triplet and $\frac{\partial P}{\partial s}$ and $\frac{\partial Q}{\partial s}$ are parallel to $\hat{k}$, we can choose $z^{(1)}(x,k)=P$ and $z^{(2)}(x,k) = Q$ to make the optical rotation term vanish and obtain equation \eqref{eq: RTENoCouplingBracket}.

\begin{remark}
    \label{remark: scattering}
    Following the derivation of the RTE in \cite{Ryzhik1996}, the scattering matrix $\MT$ is defined as $\MT_{ij}(x,k,k') = z^{(i)}(x,k)\cdot z^{(j)}(x,k')$. From this expression, it can be seen that after any rotation of the basis vectors, the rotated scattering matrix $\tilde{\MT}$ still satisfies $\tilde{\MT}(x,k,k') = \tilde{\MT}(x,k',k)^*$, and the corresponding RTE can be written as in \eqref{eq: RTENoCouplingBracket}, with $S(\MW)$ and $\Sigma$ satisfying equations \eqref{eq: Scattering} and \eqref{eq: ScatteringRate}, respectively, with the rotated scattering matrix.
\end{remark}
\section{Conclusion and Discussion}
\label{sec: Conc-Disc}
In this paper, we constructed a metriplectic formulation of the radiative transfer equation with polarization by constructing a suitable matrix extension of the canonical Poisson bracket and by defining a suitably rotating basis for the polarization plane that eliminates the optical rotation term in the equations. 
In particular, this choice of basis allowed us to show that the antisymmetric bracket defined in \eqref{eq: Symp_Bracket_RTE} satisfies the Jacobi identity. If a basis is chosen such that the optical rotation term $\MN$ does not vanish, we were able to derive several almost-Poisson brackets, which are not shown here, i.e., antisymmetric bilinear forms that do not satisfy the Jacobi identity. Therefore, we postulate the following conjecture:
\begin{conjecture}
    It is not possible to construct a metriplectic formulation of the radiative transfer equation in any basis of the polarization plane that does not eliminate the optical rotation term.
\end{conjecture}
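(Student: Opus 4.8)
I would aim to prove the conjecture as an obstruction theorem, in three stages. \emph{Stage 1: reduction to the Poisson bracket.} Fix a polarization basis $z^{(1)},z^{(2)}$ in which the optical rotation term does not vanish, and write the right-hand side of \eqref{eq: RTE} as $\mathcal{L}\MW=\cbra{\Omega}{\MW}_M+[\MN,\MW]+S(\MW)-\Sigma\MW$ with $[\MN,\MW]=\MN\MW-\MW\MN$. Cyclicity of the trace gives $\langle[\MN,\MU],\MV\rangle=\int_\dom\Tr(\MN[\MU,\MV])\dx\dk=-\langle\MU,[\MN,\MV]\rangle$, so $\MW\mapsto[\MN,\MW]$ is skew-adjoint for the inner product \eqref{eq: InnerProductMatrix}, and it leaves both $\fh$ and $\fs$ of \eqref{eq: Hamiltonian_Functional_RTE}, \eqref{eq: Entropy_Functional_RTE} invariant. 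Since $\cbra{\Omega}{\cdot}_M$ is skew-adjoint and $S-\Sigma$ is self-adjoint and negative semidefinite, this identifies the skew-adjoint and self-adjoint parts of $\mathcal{L}$. Because $\fs$ is a Casimir of the Poisson bracket and $\fh$ a Casimir of the metric bracket, any metriplectic formulation satisfies $\frac{d\ff_\MU}{dt}=\pbra{\ff_\MU}{\fh}+\mbra{\ff_\MU}{\fs}$; using symmetry and negative semidefiniteness of $\mbra{\cdot}{\cdot}$ (and, for a clean statement, that the bracket structure maps are independent of $\MW$, as is natural for the linear RTE) one then argues that $\mbra{\ff_\MU}{\fs}$ reproduces the dissipative part $S-\Sigma$ and hence that $\pbra{\ff_\MU}{\fh}=\langle\MU,(\cbra{\Omega}{\cdot}_M+[\MN,\cdot])\MW\rangle$. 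The conjecture thus reduces to: there is no Poisson bracket with an entropy Casimir and a Hamiltonian functional whose Hamiltonian vector field equals $\cbra{\Omega}{\cdot}_M+[\MN,\cdot]$.

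\emph{Stage 2: the natural candidate and the curvature obstruction.} The operator $\cbra{\Omega}{\cdot}_M+[\MN,\cdot]$ is precisely \eqref{eq: Matrix_Bracket} with the partial derivatives replaced by covariant derivatives $D=\nabla+[\mathcal{A},\cdot]$ for the connection one-form $\mathcal{A}$ that $\MN=n\MJ$ determines --- recall from Section \ref{sec: Coupling} that $\MN$ transforms exactly as such a connection under rotations of $z^{(1)},z^{(2)}$. The induced bracket $\pbra{\fa}{\fb}_\MN=\int_\dom\Tr\bigl(\MW\cbra{\fracd{\fa}{\MW}}{\fracd{\fb}{\MW}}_{M,\MN}\bigr)\dx\dk$, with $\cbra{\cdot}{\cdot}_{M,\MN}$ built from $D$, reproduces the rotation-coupled RTE and keeps $\fs$ as a Casimir. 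Repeating the combinatorics of the proof of Proposition \ref{prop: Matrix_Bracket}, the terms of the Jacobiator built only from the covariant derivatives cancel, and the residual obstruction is a multiple of the curvature of $\mathcal{A}$, i.e. of a canonical-bracket expression in $n$ of the Frenet--Serret/Darboux type from Section \ref{sec: Coupling}. Since $\MN$ takes values in the abelian subalgebra $\mathbb{R}\MJ$, this curvature is closed, and the analysis of Section \ref{sec: Coupling} shows it can be removed pointwise by a rotation of the polarization frame exactly when it vanishes; this proves the conjecture within the natural class of connection-built brackets.

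\emph{Stage 3: ruling out all other formulations --- the main obstacle.} The hard part is to exclude any remaining freedom: additional Casimirs, nonlocal or $\MW$-dependent brackets, or a Hamiltonian functional other than $\int_\dom\Tr(\Omega\MW)\dx\dk$. I would first pin down $\fh$ modulo Casimirs, by exploiting that $\fracd{\ff_\MU}{\MW}=\MU$ ranges over all of $C^\infty_0(\dom,\HH)$ and that $\pbra{\ff_\MU}{\fh}$ is bilinear and antisymmetric in $(\MU,\fracd{\fh}{\MW})$ while reproducing $\langle\MU,(\cbra{\Omega}{\cdot}_M+[\MN,\cdot])\MW\rangle$; then the Jacobi identity on the triples $(\ff_\MU,\ff_\MV,\fh)$ and $(\ff_\MU,\ff_\MV,\ff_\MT)$ should force the bracket to be of Lie--Poisson type for some Lie bracket on the test functions (as in the scalar case of \cite{Schlottbom2021}) realizing $\cbra{\Omega}{\cdot}_M+[\MN,\cdot]$ as an inner derivation. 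The crux is that $[\MN,\cdot]$ is a zeroth-order operator and is not $\cbra{X}{\cdot}_M$ for any matrix field $X$, so realizing the full operator as an inner derivation forces a connection-type extension of $\cbra{\cdot}{\cdot}_M$, and the curvature obstruction of Stage 2 then has to reappear --- I expect it to persist as a cohomology class that is invariant under the admissible changes of bracket, which is the heart of a complete proof. A genuinely global ingredient is needed as well: the plane $\hat{k}^\perp$ forms a nontrivial bundle over the sphere of directions, so no continuous global orthonormal frame $z^{(1)},z^{(2)}$ exists and the rotation eliminating $\MN$ cannot be chosen globally; this is exactly the regime where the conjecture has content and where I would expect the obstruction class to be nonzero.
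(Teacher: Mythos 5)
You should first note that the statement you are proving is stated in the paper only as a \emph{conjecture}: the authors give no proof, only the heuristic that Rytov's law acts as a non-holonomic constraint and that such systems generically admit only almost-Poisson brackets \cite{VanDerSchaft1994}, together with the remark that in a basis with $\MN\neq 0$ they could construct several antisymmetric brackets that fail the Jacobi identity. Your Stage~2 essentially reconstructs that unpublished computation: the connection-type extension of $\cbra{\cdot}{\cdot}_M$ reproduces the dynamics but its Jacobiator picks up a curvature term, so within that class one only gets almost-Poisson brackets. This is consistent with the paper, but it is evidence for the conjecture, not a proof of it.

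The genuine gap is that the universality claim --- that \emph{no} choice of Poisson bracket, metric bracket, Hamiltonian and entropy works --- is exactly what Stages~1 and~3 leave open. In Stage~1 the reduction is not forced: a metriplectic formulation is not required to use the functionals \eqref{eq: Hamiltonian_Functional_RTE}--\eqref{eq: Entropy_Functional_RTE}, nor must the split of the dynamics into $\pbra{\cdot}{\fg}$ and $\mbra{\cdot}{\fg}$ coincide with the skew-adjoint/self-adjoint decomposition of the generator; and your auxiliary assumption that the bracket structure maps be independent of $\MW$ excludes precisely the relevant class, since the paper's own bracket \eqref{eq: Symp_Bracket_RTE} is of Lie--Poisson type and depends linearly on $\MW$. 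In Stage~3 you acknowledge the core difficulty yourself: the claims that the Jacobi identity ``should force'' a Lie--Poisson form realizing $\cbra{\Omega}{\cdot}_M+[\MN,\cdot]$ as an inner derivation, that $[\MN,\cdot]$ cannot be so realized, and that the curvature obstruction survives as a bracket-independent cohomology class are stated as expectations without proof, and the assertion that a zeroth-order commutator cannot equal $\cbra{\MX}{\cdot}_M$ for any matrix field $\MX$ is itself unargued. The global remark about the nontriviality of the bundle $\hat{k}^\perp$ over the sphere of directions is interesting but cuts both ways: if no continuous global frame exists, it also calls for care in the paper's own elimination of $\MN$, and it does not by itself produce the claimed obstruction for a fixed basis in which $\MN\neq 0$. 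In short, your proposal is a reasonable research program that agrees with the authors' heuristic, but it does not close the conjecture; presenting it as a proof would overstate what Stages~1 and~3 establish.
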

A heuristic argument supporting this conjecture is Rytov's law \cite{Vinitskii1990}, which may be seen as a non-holonomic constraint, and that Hamiltonian systems with non-holonomic constraints can only be described by an almost-Poisson bracket that does not satisfy the Jacobi identity \cite{VanDerSchaft1994}.

In this manuscript, we considered the domain $\dom = \RR^3\times \RR^3\setminus\{0\}$ without any boundary conditions. In future work, boundary conditions, which are relevant for multi-physics problems or control applications, may be treated by considering the port-Hamiltonian formalism \cite{vanDerSchaft2007, VanderSchaft2014}.

Finally, let us mention that the metriplectic formulation can be used to construct discretizations that (automatically) preserve energy and dissipate entropy also on the discrete level, see \cite{Suzuki2016, Kraus2017, Hirvijoki2021} for works in this direction.

\ack
VB and MS acknowledge support by the Dutch Research Council (NWO) via the Mathematics Clusters grant no. 613.009.133.
\section*{Appendix}

\changecolor{black}\hypertarget{Proof_Matrix_Bracket}{\textbf{Proof of Proposition \ref{prop: Matrix_Bracket}:}}\changecolor{blue}
    \textit{For sufficiently regular $2\times 2$ matrix-valued functions $\MU,\MV,\MW$ the bracket $\cbra{\cdot}{\cdot}_M$ defined in \eqref{eq: Matrix_Bracket} is bilinear, anti-symmetric and satisfies the Jacobi identity.}
    \begin{align*}
        \cbra{\cbra{\MU}{\MV}_M}{\MW}_M+\cbra{\cbra{\MV}{\MW}_M}{\MU}_M+\cbra{\cbra{\MW}{\MU}_M}{\MV}_M = 0.
    \end{align*}
\begin{proof}
For sufficiently smooth matrix-valued functions $\MU, \MV$ let us recall the definition of the bracket $\cbra{\cdot}{\cdot}_M$ given in \eqref{eq: Matrix_Bracket}, i.e.,
    \begin{align*}
        \cbra{\MU}{\MV}_M = \sum_{j=0}^3\cbra{U_j}{V_j}\sigma_j.
    \end{align*}
Bilinearity and anti-symmetry of the bracket follow directly from bilinearity and anti-symmetry of the canonical Poisson-bracket $\cbra{\cdot}{\cdot}$ given in \eqref{eq: cbra}. The Jacobi identity follows, since
\begin{align*}
  &\left[\left[\MU,\MV\right]_{M}, \MW\right]_M
  +\left[\left[\MV,\MW\right]_{M}, \MU\right]_M
  +\left[\left[\MW,\MU\right]_{M}, \MV\right]_M\\
  &=\sum_{j=0}^3 \left( [[U_j,V_j], W_j] +[[V_j,W_j], U_j] + [[W_j,U_j], V_j]\right) \sigma_j= 0,
\end{align*}
where we used in the last step that the canonical Poisson bracket $[\cdot,\cdot]$ satisfies the Jacobi identity.

\end{proof}
\noindent \changecolor{black}\hypertarget{Proof_Symp_Jacobi}{\textbf{Complement to the proof of Proposition \ref{prop: Symp_Bracket}:}}\changecolor{blue}
    \textit{For sufficiently smooth functionals $\fa, \fb,\fc$ of $\MW$, the bracket defined in \eqref{eq: Symp_Bracket_RTE} satisfies the Jacobi identity.}
    \begin{align*}
        \pbra{\pbra{\fa}{\fb}}{\fc}+\pbra{\pbra{\fb}{\fc}}{\fa}+\pbra{\pbra{\fc}{\fa}}{\fb}=0.
    \end{align*}
    
\begin{proof}
We adapt the steps presented in \cite[pp. 571-572]{Abraham1988} to our situation.
We will denote the $L^2(\dom,\HH)$ inner product as $\langle\cdot,\cdot\rangle$; see also \eqref{eq: InnerProductMatrix}.
For the first derivative of a functional $\fa$ it holds that:
\begin{align*}
    D\fa[\MW](\MU) = \lim_{\varepsilon\to 0 } \frac{1}{\varepsilon}\left( \fa[\MW+\varepsilon\MU]-\fa[\MW]\right) = \left\langle\fracd{\fa}{\MW}[\MW],\MU\right\rangle.
\end{align*}
We recall also the definition of the second derivative of a functional $\fa$, denoted $D^2\fa$.
For fixed $\MW$, $D^2\fa[\MW]$ is a bilinear map in perturbations $\MU,\MV$ of $\MW$ defined as the derivative of $D\fa[\MW](\MU)$, i.e.,
\begin{align*}
    D^2\fa[\MW](\MU,\MV)&=\lim_{\varepsilon\to 0} \frac{1}{\varepsilon} \left(D\fa[\MW+\varepsilon\MV](\MU)-D\fa[\MW](\MU)\right)\\
    &= \lim_{\varepsilon\to 0} \frac{1}{\varepsilon}\left( \left\langle \fracd{\fa}{\MW}[\MW+\varepsilon \MV],\MU \right\rangle - \left\langle \fracd{\fa}{\MW}[\MW],\MU\right\rangle\right)\\
    &= \left\langle \fracdsqr{\fa}{\MW}[\MW](\MV),\MU\right\rangle
\end{align*}
That is, $\fracdsqr{\fa}{\MW}[\MW](\MV)\in L^2(\dom,\HH)$ represents $D^2\fa[\MW](\cdot,\MV)$.
For simplicity, we will henceforth drop the explicit dependency on $\MW$, i.e., we write $\fracdsqr{\fa}{\MW}(\MV)$ instead of $\fracdsqr{\fa}{\MW}[\MW](\MV)$.

The derivative of the Poisson bracket evaluated in $\MW$ in direction $\MV$ is computed using the product rule
\begin{align*}
   &\left\langle\fracd{ \pbra{\fa}{\fb} }{\MW} ,\MV\right\rangle =D \pbra{\fa}{\fb}(\MV) =  D \left\langle \MW,\cbra{\fracd{\fa}{\MW}}{\fracd{\fb}{\MW}}_M \right\rangle(\MV)\\
    &=\left\langle \MV,\cbra{\fracd{\fa}{\MW}}{\fracd{\fb}{\MW}}_M \right\rangle
    +\left\langle \MW,\cbra{\fracdsqr{\fa}{\MW}(\MV)}{\fracd{\fb}{\MW}}_M \right\rangle
    +\left\langle \MW,\cbra{\fracd{\fa}{\MW}}{\fracdsqr{\fb}{\MW}(\MV)}_M \right\rangle.
\end{align*}
In order to simplify the second and the third term, we introduce the unbounded adjoint operator ${\rm ad}(\MX)$ defined by ${\rm ad}(\MX)[\MY]=\cbra{\MX}{\MY}_M$ and its dual operator ${\rm ad}(\MX)^*$, defined by
\begin{align*}
    \langle {\rm ad}(\MX)^* \MU,\MY\rangle = \langle \MU, \cbra{\MX}{\MY}_M\rangle,
\end{align*}
for sufficiently smooth $\MY,\MU$. Using the adjoint and its dual, as well as the definition of the second functional derivative, we obtain the following identities
\begin{align*}
    \left\langle \MW,\cbra{\fracdsqr{\fa}{\MW}(\MV)}{\fracd{\fb}{\MW}}_M \right\rangle &=- \left\langle {\rm ad}\left( \fracd{\fb}{\MW}\right)^*\MW, \fracdsqr{\fa}{\MW}(\MV) \right\rangle=- D^2\fa\left(\MV, {\rm ad}\left( \fracd{\fb}{\MW}\right)^*\MW\right),\\
    \left\langle \MW,\cbra{\fracd{\fa}{\MW}}{\fracdsqr{\fb}{\MW}(\MV)}_M \right\rangle &= \hphantom{-} \left\langle {\rm ad}\left( \fracd{\fa}{\MW}\right)^*\MW, \fracdsqr{\fb}{\MW}(\MV) \right\rangle =  \hphantom{-} D^2\fb\left( \MV, {\rm ad}\left( \fracd{\fa}{\MW}\right)^*\MW\right),
\end{align*}
where we also used the symmetry of the second derivative.
Collecting the terms, we identify
\begin{align*}
    \fracd{ \pbra{\fa}{\fb} }{\MW} = \cbra{\fracd{\fa}{\MW}}{\fracd{\fb}{\MW}}_M - \fracdsqr{\fa}{\MW}\left( {\rm ad}\left( \fracd{\fb}{\MW}\right)^*\MW\right) + \fracdsqr{\fb}{\MW}\left( {\rm ad}\left( \fracd{\fa}{\MW}\right)^*\MW\right).
\end{align*}
Consequently, we can evaluate
\begin{align*}
    &\hphantom{=}\pbra{\pbra{\fa}{\fb}}{\fc} = \left\langle \MW, \cbra{ \fracd{ \pbra{\fa}{\fb} }{\MW}}{\fracd{\fc}{\MW}}_M \right\rangle\\
    &=  \left\langle \MW, \cbra{ \cbra{\fracd{\fa}{\MW}}{\fracd{\fb}{\MW}}_M}{\fracd{\fc}{\MW}}_M \right\rangle
    -   \left\langle \MW, \cbra{ \fracdsqr{\fa}{\MW}\left( {\rm ad}\left( \fracd{\fb}{\MW}\right)^*\MW\right)}{\fracd{\fc}{\MW}}_M \right\rangle\\
    &\hphantom{=\ \left\langle \MW, \cbra{ \cbra{\fracd{\fa}{\MW}}{\fracd{\fb}{\MW}}_M}{\fracd{\fc}{\MW}}_M \right\rangle}+   \left\langle \MW, \cbra{ \fracdsqr{\fb}{\MW}\left( {\rm ad}\left( \fracd{\fa}{\MW}\right)^*\MW\right)}{\fracd{\fc}{\MW}}_M \right\rangle\\
    &=  \left\langle \MW, \cbra{ \cbra{\fracd{\fa}{\MW}}{\fracd{\fb}{\MW}}_M}{\fracd{\fc}{\MW}}_M \right\rangle
    +   \left\langle  {\rm ad}\left( \fracd{\fc}{\MW}\right)^*\MW, \fracdsqr{\fa}{\MW}\left( {\rm ad}\left( \fracd{\fb}{\MW}\right)^*\MW\right) \right\rangle\\
    &\hphantom{=\ \left\langle \MW, \cbra{ \cbra{\fracd{\fa}{\MW}}{\fracd{\fb}{\MW}}_M}{\fracd{\fc}{\MW}}_M \right\rangle}-    \left\langle {\rm ad}\left( \fracd{\fc}{\MW}\right)^* \MW, \fracdsqr{\fb}{\MW}\left( {\rm ad}\left( \fracd{\fa}{\MW}\right)^*\MW\right) \right\rangle\\
    &=\left\langle \MW, \cbra{ \cbra{\fracd{\fa}{\MW}}{\fracd{\fb}{\MW}}_M}{\fracd{\fc}{\MW}}_M \right\rangle
      + D^2\fa\left( {\rm ad}\left(\fracd{\fb}{\MW}\right)^*\MW, {\rm ad}\left(\fracd{\fc}{\MW}\right)^*\MW\right)\\
      &\hphantom{=\ \left\langle \MW, \cbra{ \cbra{\fracd{\fa}{\MW}}{\fracd{\fb}{\MW}}_M}{\fracd{\fc}{\MW}}_M \right\rangle}- D^2\fb\left({\rm ad}\left(\fracd{\fa}{\MW}\right)^*\MW, {\rm ad}\left(\fracd{\fc}{\MW}\right)^*\MW\right)
\end{align*}
Summing up the cycling permutations of the previous expression and using the fact that $\cbra{\cdot}{\cdot}_M$ satisfies the Jacobi identity, verifies the Jacobi identity for the Poisson bracket.
\end{proof}
\noindent \changecolor{black}\hypertarget{Proof_Casimirs}{\textbf{Proof of proposition \ref{prop: Casimirs}:}}\changecolor{blue}
\textit{The entropy functional $\fs$ defined in \eqref{eq: Entropy_Functional_RTE} is a Casimir of the Poisson bracket given in \eqref{eq: Symp_Bracket_RTE} and the Hamiltonian functional $\fh$ in \eqref{eq: Hamiltonian_Functional_RTE} is a Casimir of the metric bracket in \eqref{eq: Metr_Bracket_RTE}.}
\begin{proof}
    To show that $\fs$ is a Casimir invariant of the Poisson bracket, note that $\fracd{\fs}{\MW} = -\MW$ and take an arbitrary functional $\ff_\MU[\MW]$ with functional derivative $\fracd{\ff_\MU}{\MW} = \MU$.
    The Poisson bracket acting on these two functionals is then given by
    \begin{align*}
        -\pbra{\ff_\MU}{\fs} &= \int_\dom\Tr\left(\MW\cbra{\MU}{\MW}_M\right)\dx\dk\\
        &=\int_\dom\Tr\left(\sum_{i=0}^3\sum_{j=0}^3 W_{i}\cbra{U_{j}}{W_{j}}\sigma_i\sigma_j\right)\dx\dk\\
        &=\sum_{j=0}^3\int_\dom\left(W_j\cbra{U_j}{W_j}\right)\dx\dk\\
        &=\sum_{j=0}^3\int_\dom\left(U_j\cbra{W_j}{W_j}\right)\dx\dk = 0,
    \end{align*}
    where the last step follows from integration by parts, and the bracket vanishes due to
    the anti-symmetry of the canonical  Poisson bracket.

    To show that $\fh$ is a Casimir invariant of the metric bracket, note that $\fracd{\fh}{\MW}=\Omega$. Which depends on $k$ only as $|k|$. Since in the metric bracket, these functional derivatives are integrated over the sphere $|k|=|k'|$ these terms cancel and the metric bracket vanishes.
\end{proof}
\section*{References}
\bibliography{references}
\bibliographystyle{iopart-num}
\end{document}